\documentclass[a4paper,12pt]{amsart}
\usepackage{graphicx}
\usepackage[]{color}

\usepackage[]{tikz} \usetikzlibrary{matrix, chains, arrows}

\usepackage[margin=3cm]{geometry}
\vfuzz2pt 
\hfuzz2pt 
\newtheorem{thm}{Theorem} 

\newtheorem{lem}[thm]{Lemma}
\newtheorem{prop}[thm]{Proposition}
\theoremstyle{definition}

\theoremstyle{remark}

\numberwithin{equation}{section}
\newcommand{\PP}{\mathcal{P}}

\title{
Can we `future-proof'  consensus trees?}%
\author{David Bryant, Andrew Francis, Mike Steel}
\thanks{}%
\subjclass{}%
\keywords{}%
\date{\today}

\begin{document}

\begin{abstract}
Consensus methods are widely used for combining phylogenetic trees into a single estimate of the evolutionary tree for a 
group of species.  As more taxa are added, the new source trees may begin to tell a different evolutionary story when restricted to the original set of taxa. However, if the new trees, restricted to the original set of taxa, were to agree exactly with the earlier trees, then we might hope that their consensus would either agree with or resolve the original consensus tree.  In this paper, we ask under what conditions consensus methods exist that are `future proof' in this sense.  While we show that some methods (e.g. Adams consensus) have this property for specific types of input, we  also establish a rather surprising `no-go' theorem: there is no `reasonable' consensus method that satisfies the future-proofing property in general. We then investigate a second  notion of  `future proofing' for consensus methods, in which trees (rather than taxa) are added, and establish some positive and negative results.  We end with some questions for future work.  
\end{abstract}

\keywords{phylogeny, consensus, extension stability, Adams consensus}
\maketitle


\section{Introduction}

Consensus methods are widely used to combine phylogenetic trees into a single phylogeny. Traditionally, consensus methods have been applied as a technique for combining information from diverse data sources 
\cite{swo91}. More recently, consensus methods are most commonly used as a technique for summarizing output from MCMC analyses (e.g. \cite{10hol08}), or estimating species trees from gene trees (e.g. \cite{7deg09}).  

In these situations, consensus methods provide a way to combine phylogenetic trees into a single tree representing the underlying evolutionary history of the species under study.

Suppose we have a collection of trees for a set $S$ of taxa that we have combined by some consensus method to produce an estimate $T$ of the tree for $S$.   In future, we may collect data on additional species, build trees from these data, and infer a consensus tree $T'$ from these trees. Of course, the  tree inferred for our original set of species $S$ may well differ from what we obtain today. 
But suppose that the trees they use to build their consensus tree agree exactly with the trees we
have today once we restrict attention to relationships between the species in $S$. In that case, we might hope that the tree we obtain by restricting $T'$ to $S$ should also agree with our existing tree $T$, or at least
be a refinement of $T$ (perhaps because the additional taxa help us to resolve polytomies that cannot be resolved at present).

In other words, if the future phylogenies agree with our present ones when restricted to the taxa we have available today, can we hope that the evolutionary relationships established today using consensus are `safe' in the sense that different branching orders need not be postulated in future?

We call this property of a consensus method {\em extension stability}. It asserts that if a consensus tree has been produced from a profile of trees, and a new species is added to each of them, one expects that the new consensus tree on the profile of extended trees will restrict back down to the consensus of the original profile (or at least display the same monophyletic clusters). A formal definition is given in the next section, and Fig.~\ref{f:motivation} illustrates this concept for a particular consensus method (majority rule) applied to a particular input profile of three binary trees. In this example, an additional taxon $x$ appears in different places in the input trees, however the consensus tree on the full taxon set (including $x$) still displays
the original consensus relationship between the original taxa $a,b,c$. For other profiles, majority rule consensus can fail to satisfy extension stability.

\begin{figure}[htb]
\center
\includegraphics[width=10cm]{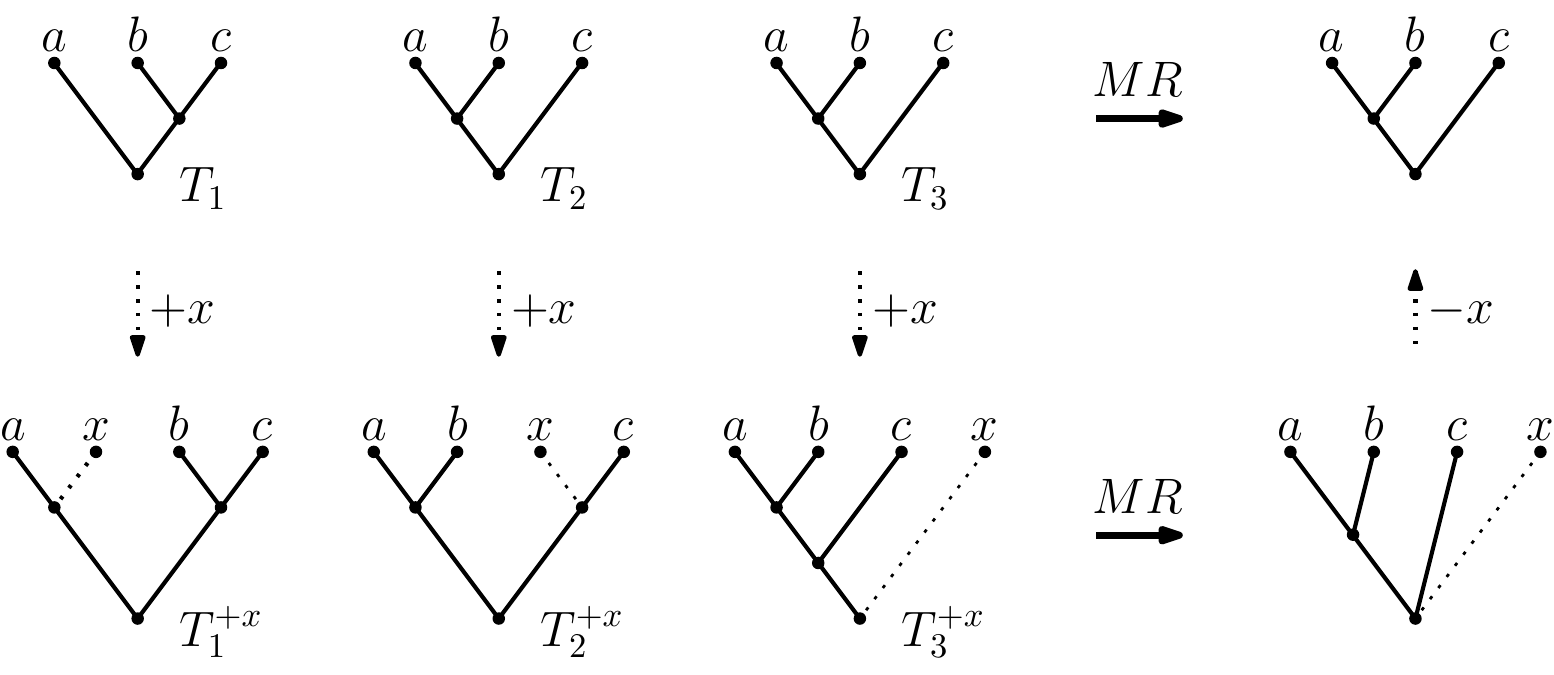}
\caption{An example where extension stability applies. The majority rule (MR) consensus of the trees $T_1, T_2$ and $T_3$ is the resolved tree $ab|c$.  If an additional taxon $x$
is analysed and the resulting trees $T_1^x, T_2^x, T_3^x$ agree with $T_1, T_2$ and $T_3$ up to the placement  of $x$, then the majority rule consensus of the trees still
displays the original relationship $ab|c$ once taxon $x$ is removed.}\label{f:motivation}
\end{figure}


In this paper, we explore a very simple question:  which, if any, reasonable consensus methods can achieve this goal?
Surprisingly, there are standard methods (including Adams consensus) that do not satisfy extension stability \cite{15vel}.  The question of whether `reasonable' extension stable consensus functions exist at all was left open in that paper. 
Here we show that remarkably, in general, no such consensus function exists.  Our `no-go' theorem is of a similar spirit but quite different from (and deeper than) the non-existence result of Proposition 2 from \cite{16ste00}.  We do, however, show that under certain restrictions, either on the profile of input trees or by modifying the extension stability condition, there exist consensus methods (including Adams consensus) that are stable in this limited sense.

In a later section, we also consider a further type of stability condition that arises when trees (rather than taxa) are added.  This condition means that one can simply update the existing consensus tree when a new input tree is sampled, and the result will be the same as re-analysing the entire sequence of input trees. We show that certain methods have this property but that known methods that preserve nesting or triplet relationships do not.

Our approach provides further applications of the axiomatic approach to consensus methods in systematic biology pioneered by
F.R. McMorris and colleagues over more than three decades (for an overview, see \cite{6day03}, with more recent results in \cite{8don11}, \cite{13mcm}).   This has led to a range of theorems which demonstrate that no consensus method can simultaneously satisfy various combinations of criteria, each of which seems both desirable and reasonable. In some cases, the results are particular to unrooted trees (for example, \cite{12mcm85, 16ste00}) while in other cases they apply in the rooted setting.   
  
This axiomatic approach is motivated, in part, by the celebrated theorem of Kenneth Arrow \cite{3arr} who showed that four seemingly very plausible criteria cannot be simultaneously achieved by any voting method. While these `impossibility' results (in voting theory and in phylogenetics) may be seen as negative, they have the positive effect of focusing attention on which criteria should be dropped, or at least weakened, to devise  methods that achieve various combinations of desirable attributes.

\bigskip

\mbox{}

\bigskip

\section{Basic notation}
\noindent {\bf Trees and clusters} 

In this paper, we deal exclusively with rooted phylogenetic trees and (following \cite{14sem03}), we will let $RP(X)$ denote the
set of rooted phylogenetic trees on leaf set $X$ of taxa.  
 A {\em cluster} of a tree is a subset of the leaves that forms a monophyletic clade (i.e. it is the set of all leaves that are descendants  of some vertex).  It is well known that the set of clusters of any rooted phylogenetic tree forms a hierarchy (i.e. two clusters are either disjoint or one is a subset of the  other) and there is a one-to-one correspondence between the trees in $RP(X)$ (up to equivalence)
 and the set of hierarchies on $X$ that contain $X$ and the singleton elements (and not containing the empty set). 
 
 A tree $T' \in RP(X)$ {\em refines} a tree $T \in RP(X)$ if every cluster of $T$ is present in $T'$;  we denote this by writing
 $T \preceq T'$. Thus $RP(X)$ forms a partially ordered set (poset) with the star tree as the unique minimal element and the binary (fully resolved) trees as the maximal elements
 ($RP(X)$ is also a median semilattice, which is a concept that has been developed in more general consensus settings \cite{mcm95}). 
  Given a tree $T \in RP(X)$ and a subset $Y$ of $X$,  we let $T|_Y \in RP(Y)$ be the tree obtained from $T$ by restricting $T$ to just the leaves in $Y$.  When $|Y|=3$ and $T|_Y$ is binary, we say that this tree is a {\em rooted triple} and is {\em displayed} by $T$ (we write $ab|c$ if this rooted triple has $a,b$ as a cherry and $c$ as the third leaf). 
 
 \bigskip
 
\noindent {\bf Consensus methods} 
 
 A {\em profile} of trees  from $RP(X)$  is an ordered tuple of trees
$(T_1, \ldots,T_k) \in RP(X)^k$ for some $k$; we call these trees the `input trees'.  Notice that the same tree can appear more than once. A (phylogenetic) {\em consensus method} is a function $\varphi$ that, for every set $Y$ of taxa and every number $k\geq 1$, associates to each profile of $k$ trees from $RP(Y)$  a corresponding tree in $RP(Y)$. 

The most widely used consensus method in contemporary phylogenetics is probably {\em majority rule}, which returns the tree that has clusters present in more than half of the input trees. Variations on this include {\em strict consensus}, which returns the tree that has clusters present in every input tree, and {\em loose consensus}, which returns the tree that has clusters that are (i) present in at least one input tree and (ii) are compatible with all other input trees.  More recent variations on these consensus methods have also been devised, including the majority (+) method \cite{8don11} and frequency-difference consensus \cite{11jan}.  All these methods satisfy the (Pareto) property that if a cluster is present in all the input trees, then that cluster will also be present in the output.

Another type of consensus method outputs the tree that occurs most often in the profile (in the case of ties, then some consensus of these tied trees ---  perhaps using majority rule or strict consensus --- can be applied).  Such methods are particularly relevant when summarising a posterior distribution on trees from a sample generated by MCMC.

A quite different class of consensus methods  aims to also preserve  other features shared across the input trees, such as nestings or 3-taxon relationships. We describe two such methods.  One is the well-known {\em Adams consensus} method  \cite{1ada86}, which preserves nestings, and which we denote by $\varphi_{Ad}$.  The other is a lesser known consensus method, called  {\em local consensus} in \cite{5bry03}, and is based on applying the BUILD algorithm of  \cite{2aho81} to the set of rooted triples shared by all the input trees in the profile; we will here refer to this method as  {\em Aho consensus}, denoted by $\varphi_{Ah}$. 

Both Adams and Aho consensus  associate to each profile  $\PP = (T_1, T_2, \ldots, T_k) \in RP(X)^k$ a partition $\Pi(\PP)$ of $X$ that forms the maximal clusters of the consensus tree. 
For Adams consensus, $\Pi(\PP)$  equals the nonempty intersections of the maximal clusters of the trees in $\PP$.
For Aho consensus, $\Pi(\PP)$ is the connected components of the graph $(X, E_\PP)$, where $E_\PP$ is the set of pairs $a,b \in X$ for which there is an element $c \in X$ with
$ab|c$ displayed by every tree in $\PP$ (we will refer to this as the {\em BUILD graph} for $\PP$). Once $\Pi(\PP)$ has been determined, each method then repeats the same process on the restriction of $\PP$ to each set $A \in \Pi(\PP)$, 
eventually producing a hierarchy of sets on $X$ and thus a well-defined consensus tree in $RP(X)$ (for further details, see \cite{5bry03, 6day03, 14sem03}). 
 
Both Adams and Aho consensus satisfy the following property of being {\em Pareto on rooted triples}, which is the following condition:
\begin{equation}
\label{trip1}
\mbox{ If each tree  $T \in \PP$ displays  $ab|c$, then so does $\varphi(T)$.}
\end{equation}
This condition is equivalent to the following property (for all $Y \subseteq  X$):
\begin{equation}
\label{trip2}
\mbox{If $T' \preceq T|_Y$ for every $T \in \PP$  then $T' \preceq \varphi(T)|_Y$.}
\end{equation}
 
Note that any consensus method that is Pareto on rooted triples is also Pareto on clusters.
Adams consensus also has the property that if $ab|c$ is displayed by $\varphi_{Ad}(\PP)$, then at least one tree in $\PP$ must display $ab|c$ \cite{5bry03}. 

 All the consensus methods  we have discussed (strict consensus, loose consensus, majority rule,  Adams  and Aho consensus) also 
satisfy the following property: given a profile $\PP$ and its consensus tree, if one adds the consensus tree to the profile, then the consensus on the enlarged profile is unchanged.    Formally, this is the condition that:
\begin{equation}
\label{stab}
\varphi(\PP\cdot \varphi(\PP))=\varphi(\PP),
\end{equation}
where $\cdot$ refers to the concatenation of the sequence $\PP$ with the consensus tree $\varphi(\PP)$.
This is analogous to the behaviour of the averages of real numbers (indeed, $y$ is the average of $x_1, \ldots, x_n$ if and only if $y = {\rm av}(x_1, x_2, \ldots, x_n,y)$).  A proof that Eqn.~(\ref{stab}) holds for the main consensus methods mentioned so far is provided in the Appendix.  Condition (\ref{stab}) corresponds to an ``Invariance'' property in \cite{mcm12}.

\section*{Regular properties for consensus methods}\label{s:regular}

To move beyond  the study of particular consensus methods, we impose only very minimal requirements on a possible consensus method. 
For this paper, we will call a consensus method $\varphi$ ``regular'' if it satisfies the following three minimal axioms:

\begin{enumerate}
	\item \textbf{Unanimity}. If the trees in $\PP$ are all the same tree $T$, then $\varphi(\PP)=T$. 
	\item \textbf{Anonymity}. Changing the order of the trees in $\PP$ does not change $\varphi(\PP)$.
	\item \textbf{Neutrality}. Changing the labels on the leaves of the trees in $\PP$ simply relabels the leaves of $\varphi(\PP)$ in the same way. 
\end{enumerate}

Anonymity corresponds to a mathematical condition referred to as commutativity.  It is the condition that for any profile $\PP = (T_1, T_2, \ldots, T_k)$ and any permutation $\rho$ on $\{1, \ldots, k\}$, we have:
$\varphi(\PP) = \varphi(\rho(\PP))$, where $\rho(\PP)$ is the profile $(T_{\rho(1)}, \ldots, T_{\rho(k)}).$
Nearly all existing consensus methods satisfy this condition of not  taking the order of the input trees into account.  Moreover, even for consensus methods that do allow trees to be ranked according to the strength of support, such methods should still be able to deal with inputs in which the trees are to be regarded of equal value, and in that case unanimity should hold.

Neutrality \cite{6day03} corresponds to a mathematical condition referred to as equivariance.  Informally speaking, it is the requirement that the names given to the objects labelling the leaves of the tree should not play any special role in deciding how the consensus tree is determined. For example, if `dog' and `cat' were interchanged in each input tree, then the output tree should just be the same consensus tree but with `dog' and `cat' interchanged.

More precisely, given a  bijection $\alpha: X \rightarrow X'$, and any tree $T \in RP(X)$, let $T^\alpha$ be the tree in $RP(X')$ 
obtained from $T$ by replacing leaf $x$ with leaf $\alpha(x)$ for each $x \in X$.   Then neutrality is the condition that for any profile 
$\PP = (T_1, T_2, \ldots, T_k)$
and any bijection $\alpha: X \rightarrow X'$, we have $\varphi(\PP^\alpha) = \varphi(\PP)^\alpha$, where $\PP^\sigma = (T^\alpha_1, T^\alpha_2, \ldots, T^\alpha_k)$.


We will refer to any consensus method that satisfies these three properties -- unanimity, anonymity and neutrality  -- as a {\em regular} consensus method.  All standard phylogenetic consensus methods (e.g. strict consensus, majority rule, loose consensus and Adams consensus) satisfy these properties.

The additional property that is the main focus of this paper is \emph{extension stability}, which is defined as follows.  
\begin{enumerate}\setcounter{enumi}{3}
	\item \textbf{Extension stability}. For every phylogenetic profile $\PP$ consisting of trees from a leaf set $X$ of arbitrarily large size, and  any nonempty set $Y\subseteq X$, $$\varphi(\PP|_Y)\preceq \varphi(\PP)|_Y.$$
\end{enumerate}

Extension stability was defined in \cite{15vel}, though it was referred to there as a ``weak independence'"condition. It is different from any of the eight ``stability conditions" $(I_1)$--$(I_8)$ defined and studied by \cite{4bar3}, and weaker than the closest analogue in that list, namely $(I_6)$.  Indeed, any consensus method that satisfies unanimity and extension stability is also Pareto on rooted triples (Eqns.~\ref{trip1}, \ref{trip2}).

Strict consensus satisfies the reverse inequality described by extension stability, namely: 
\begin{equation}
\label{reverse} 
\varphi(\PP)|_Y \preceq \varphi(\PP|_Y)
\end{equation}
(the short proof is given in the Appendix).
The fact that this inequality can be strict is well known. For example, if $X=Y \cup \{x\}$, where $x$ is a `rogue' taxon (external to $Y$) which positions itself in different places across the trees in $\PP$, then the strict consensus of the trees in  the resulting profile can be largely unresolved (and so remains unresolved when restricted to $Y$)  compared to the strict consensus of the trees
on the pruned leaf set $Y$. An example of this phenomenon arises  in the study of whale phylogeny in \cite{9gei09}, where the inclusion of the extinct taxon {\em Andrewsarchus} causes the strict consensus tree to be much less resolved than if it is removed (as shown in Appendix Fig. 2 of that paper).   
This reverse form of extension stability (Eqn.~\eqref{reverse}) fails to capture, however, the requirement that information (clusters) determined today is `safe' when future trees
agree with the ones of today on the taxa subset $Y$. To make this more obvious, observe that another regular consensus method satisfying  Eqn.~\eqref{reverse} is the  rather trivial consensus method $\varphi$ for which $\varphi(\PP)$ is always the star tree (on the same leaf set as $\PP$), except when $\PP$ is of the form $(T, T, \ldots, T)$, in which case
$\varphi(\PP) = T$.

Note that there are regular consensus methods that satisfy a restriction of the extension stability property where $X$ is required to be sufficiently small. For example, strict consensus satisfies these properties for $|X|\leq 3$, while for $|X|=4$, Adams consensus satisfies  the extension property (however, that method does not satisfy this property for $|X|=5$ as Fig.~\ref{f:adams.binary} shows).  Notice that extension stability requires that any consensus choice we make on a small leaf set will still apply for profiles of larger trees when we restrict these down to smaller leaf sets; we will see that this provides additional constraints, since the consensus trees on the sub-problems need to at least be compatible (consistent with {\em some} tree).

There are also regular consensus methods (e.g. Adams consensus) that satisfy a restriction of the extension stability property where $Y$ is required to have  size at most 3 (regardless of the size of $|X|$).   
Also, if we were to drop any one of the conditions required for a consensus method to be regular, then, as noted in \cite{15vel},
one can readily construct a method that is extension stable and satisfies the other regularity conditions. For example, if we drop anonymity, we can simply take the consensus of $(T_1, T_2, \ldots, T_k)$ to be $T_1$. 

The observation that Adams consensus (a standard and regular consensus method)  fails to satisfy this extension stability --- thanks to an example by R. Powers, reported in \cite{15vel} ---  is the starting point of this study (a simpler example involving binary trees that shows  that Adams consensus  fails is given in Figure~\ref{f:adams.binary}). The question of whether {\em any} regular consensus method can satisfy extension stability, left open by \cite{15vel},  is a main focus of this paper. 
We begin with some positive results.


\section{Positive results for extension stability}
\label{posec}

In this section, we describe some particular settings where extension stability provably holds for certain consensus methods.

\subsection*{Regular consensus methods that are extension stable for short trees.}

The {\em height} of a rooted phylogenetic tree is the length of the longest path from the root to a leaf.  We now show that for profiles of trees of height at most two, which we call {\em level 2} trees,  there exist regular consensus functions that are extension stable. Two such methods are Adams and Aho consensus.  Note that these are different consensus methods on level 2 trees, since for the profile
 $\PP = (((abcd), (ef)), ((cdef), (ab)), ((abef), (cd)))$,
$\varphi_{Ad}(\PP) = ((ab), (cd), (ef))$ whereas $\varphi_{Ah}(\PP)$ is the star tree.

\begin{thm}
\label{posthm}
If all trees in the profile $\PP$ have height at most 2, then Adams consensus $\varphi_{Ad}$ and Aho consensus $\varphi_{Ah}$
are both extension stable.
\end{thm}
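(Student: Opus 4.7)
The plan is to exploit the fact that a tree of height at most 2 on a leaf set $Z$ is completely determined by its top-level partition, which I will write as $\Pi(T)$ (the partition of $Z$ into the children-of-root clusters, analogous to $\Pi(\PP)$ in the paper but for a single tree): the clusters of $T$ are exactly $Z$, the blocks of $\Pi(T)$, and the singletons. Moreover, restricting such a tree to $Y\subseteq Z$ yields another level 2 tree with top partition $\{B\cap Y : B\in \Pi(T),\ B\cap Y\neq \emptyset\}$. So the desired relation $\varphi(\PP|_Y)\preceq \varphi(\PP)|_Y$ reduces to checking that every non-trivial block of the top partition of $\varphi(\PP|_Y)$ also appears as a block of the top partition of $\varphi(\PP)|_Y$.

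My first step would be to verify that each of $\varphi_{Ad}(\PP)$ and $\varphi_{Ah}(\PP)$ is itself level 2 when $\PP$ is. For Adams this is immediate: $\Pi(\varphi_{Ad}(\PP))$ is the meet $\bigwedge_i \Pi(T_i)$, each block of which lies inside a single block of every $\Pi(T_i)$, so the recursion on that block is run on a profile of stars and returns a star. For Aho I would first establish a structural lemma: within any connected component $C$ of the BUILD graph $(X,E_\PP)$, every element lies in a common block of $\Pi(T_i)$ for each $i$ (edges impose this by the definition of $E_\PP$, and paths transit it by transitivity of belonging to a common block). Again the recursion on $C$ is applied to stars and returns a star, so $\varphi_{Ah}(\PP)$ is the level 2 tree whose top partition equals the set of components of $(X,E_\PP)$.

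With those characterisations, the Adams case collapses to the set-theoretic identity $(\bigcap_i B_i)\cap Y = \bigcap_i (B_i\cap Y)$: the top partitions of $\varphi_{Ad}(\PP)|_Y$ and $\varphi_{Ad}(\PP|_Y)$ coincide, so the two trees are equal (in particular $\preceq$ holds). The Aho case is the substantive half, and I expect the witness-transfer step below to be the main obstacle. One easy direction is the inclusion $E_{\PP|_Y}\subseteq E_\PP\cap Y^2$: any $c\in Y$ that witnesses $ab|c$ in $\PP|_Y$ also witnesses it in $\PP$. It follows that every component $D$ of $(Y,E_{\PP|_Y})$ sits inside $C\cap Y$ for some component $C$ of $(X,E_\PP)$. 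The content of extension stability is that when $|D|\geq 2$ we in fact have $D = C\cap Y$; the potential pitfall is that a witness for $C$'s connectivity in $(X,E_\PP)$ might lie only in $X\setminus Y$, breaking paths upon restriction.

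The resolution is a witness-transfer argument: choose a direct edge $a\sim b$ inside $D$ in $E_{\PP|_Y}$ together with its witness $c\in Y$, so that $a,b$ share a block $B_i$ of $\Pi(T_i)$ with $c\notin B_i$ for every $i$. For any $x\in C\cap Y$, the structural lemma of Step 2 applied to $a,x\in C$ forces $x\in B_i$ for every $i$; the same $c$ then witnesses $ax|c$ in each $T_i$, so $a\sim x$ in $E_{\PP|_Y}$ and $x\in D$. Thus each non-trivial component $D$ equals $C\cap Y$, which is a block of the top partition of $\varphi_{Ah}(\PP)|_Y$, yielding $\varphi_{Ah}(\PP|_Y)\preceq \varphi_{Ah}(\PP)|_Y$. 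The inequality can be strict, as any $C$ with $|C\cap Y|\geq 2$ that lacks a witness in $Y$ splits completely into singletons in the restricted BUILD graph.
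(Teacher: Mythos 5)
Your proposal is correct in substance and follows the same basic strategy as the paper for the Aho half (identify the maximal clusters with the components of the BUILD graphs $(Y,E_{\PP|_Y})$ and $(X,E_\PP)$, and use the height-2 condition to show nothing nontrivial happens below the top level), but it is actually more careful at the decisive point. The paper justifies the identity ``a maximal cluster of $\varphi_{Ah}(\PP|_Y)$ is the intersection of a maximal cluster of $\varphi_{Ah}(\PP)$ with $Y$'' only by $E_{\PP|_Y}\subseteq E_\PP$, which by itself yields just the containment $D\subseteq C\cap Y$; your structural lemma (all of a component lies in a common block of each $\Pi(T_i)$) plus the witness-transfer step is exactly what is needed to upgrade this to $D=C\cap Y$ for $|D|\ge 2$, and it makes visible why height 2 is essential (for general heights the equality, and the theorem, fail). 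For the Adams half the routes differ: the paper argues triple-by-triple (every rooted triple $ab|c$ of $\varphi_{Ad}(\PP|_Y)$ forces $a,b$ into a common maximal cluster of each $T_j$ avoiding $c$ in at least one tree), whereas you argue via the meet of top partitions and the identity $(\bigcap_i B_i)\cap Y=\bigcap_i(B_i\cap Y)$, which buys a stronger-looking conclusion (equality of the two trees) at the cost of a degenerate-case slip.

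That slip is worth fixing, though it does not endanger the theorem. Your claim that $T|_Y$ has top partition $\{B\cap Y: B\in\Pi(T),\ B\cap Y\neq\emptyset\}$ fails when exactly one block $B$ of $\Pi(T)$ meets $Y$: then $T|_Y$ is the star on $Y$ and its maximal proper clusters are singletons, not $\{Y\}$. Consequently the asserted equality $\varphi_{Ad}(\PP|_Y)=\varphi_{Ad}(\PP)|_Y$ can fail: for $\PP=\bigl(((abc),d),\ ((ab),(cd))\bigr)$ and $Y=\{a,b,c\}$, the first restricted tree is a star, so $\varphi_{Ad}(\PP|_Y)$ is the star on $Y$, while $\varphi_{Ad}(\PP)|_Y$ displays $ab|c$. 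The repair is immediate: whenever some $T_i|_Y$ is a star, the Adams meet for $\PP|_Y$ consists of singletons, so $\varphi_{Ad}(\PP|_Y)$ is the star on $Y$ and the refinement $\varphi_{Ad}(\PP|_Y)\preceq\varphi_{Ad}(\PP)|_Y$ (which is all the theorem requires) holds trivially; in the remaining case every tree in $\PP$ has at least two maximal clusters meeting $Y$, your partition identity is valid, and the argument goes through as written.
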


\begin{proof}
Suppose that  every tree in $\PP=(T_1, \ldots, T_k)$ (a  profile of trees each having leaf set $X$) has height at most 2.
Then for any subset $Y$ of $X$, $\PP|_Y$ also has this property (every tree in $\PP|_Y$
has height at most 2). 
To prove the result it is sufficient to show that every rooted triple in $\varphi_{Ad}(\PP|_Y)$ is also displayed by $\varphi_{Ad}(\PP)$.

If $ab|c$ is a rooted triple displayed by $\varphi_{Ad}(\PP|_Y)$, then for every $j$, the restricted tree $T_j|_Y$ contains a maximal cluster $C_j$ that
contains $a$ and $b$, and for a least one of these clusters, $c$ is not also
present
(if $a,b,c$ are contained in $C_j$ for all $j$, then we never get $ab|c$ by the
level 2 condition, but if $a, b$ are in separate maximal clusters 
for some tree in $\PP|_Y$ we get the unresolved triple in $\varphi_{Ad}(\PP|_Y)$).

Using the fact that each tree in $\PP$ has height at most 2, 
it follows that $a$ and $b$ appear together in a maximal cluster of each tree
in $\PP$, and at least one of these maximal clusters does not contain $c$,
 so $ab|c$ is displayed by $\varphi_{Ad}(\PP)$.

For Aho consensus, the maximal clusters of $\varphi_{Ah}(\PP|_Y)$ are the connected components of the BUILD graph $(Y, E_{\PP|_Y})$, and the maximal clusters of  $\varphi_{Ah}(\PP)$  are the connected components of the BUILD graph $(X, E_\PP)$. Since $Y \subset X$ and $E_{\PP|_Y}\subseteq E_\PP$, a maximal cluster of $\varphi_{Ah}(\PP|_Y)$ is the intersection of a maximal cluster of $\varphi_{Ah}(\PP)$ with $Y$, and so is present as a cluster in $\varphi_{Ah}(\PP)|_Y$.  Moreover, the BUILD graph $(W, E_{\PP|_Y})$ for any
maximal cluster $W$ of $\varphi_{Ah}(\PP|_Y)$ is disconnected (by the height 2 condition),  so every cluster of $\varphi_{Ah}(\PP|_Y)$ is present in $\varphi(\PP)|_Y$, as required.
\end{proof}
We note that the consensus problem on level 2 trees has close links with the consensus problem on partitions, itself an area of active  research (see for example, \cite{gue11}).

\subsection{Regular consensus methods that satisfy restricted forms of extension stability.}

The following condition is a special case of extension stability, since it applies only when the consensus of the restricted profile
is binary (i.e. fully resolved).

\bigskip

\begin{enumerate}
	\item[] {\em Binary extension stability}. For every phylogenetic profile $\PP$ consisting of trees from a leaf set $X$ of arbitrarily large size, and  any nonempty set $Y\subseteq X$, if $\varphi(\PP|_Y)$ is a binary tree, then it equals $\varphi(\PP)|_Y.$
\end{enumerate}

\bigskip

A second restricted form of extension stability is that obtained by restricting the choice of $Y$ to either the clusters of 
$\varphi(\PP)$ or to clusters present in all trees in $\PP$.  More formally, we can define the following two properties:

\begin{enumerate}
	\item[] {\em Extension stability for clusters (I)}. For every phylogenetic profile $\PP$ consisting of trees from a leaf set $X$ of arbitrarily large size, and  any nonempty set $Y\subseteq X$, if $Y$ is a cluster of $\varphi(\PP)$, then $\varphi(\PP |_Y) \preceq \varphi(\PP)|_Y.$
\end{enumerate}

\begin{enumerate}
	\item[] {\em Extension stability  for clusters (II)}. For every phylogenetic profile $\PP$ consisting of trees from a leaf set $X$ of arbitrarily large size, and  any nonempty set $Y\subseteq X$, if $Y$ is a cluster of every tree in $\PP$  then $\varphi(\PP |_Y) \preceq \varphi(\PP)|_Y.$
\end{enumerate}

It turns out there are regular consensus methods that satisfy all these restricted versions of extension stability, and, once again,
Adams consensus suffices. The proof of the following result is provided in the Appendix.

\begin{thm}
\label{ad2}
Adams consensus satisfies binary extension stability, extension stability for clusters (I) and extension stability for clusters (II).
\end{thm}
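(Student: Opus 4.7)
The plan is to derive extension stability for clusters (I) and (II) as immediate consequences of a structural fact built into the recursive definition of Adams consensus: for any cluster $B$ of $\varphi_{Ad}(\PP)$, the subtree of $\varphi_{Ad}(\PP)$ rooted at $B$ coincides with $\varphi_{Ad}(\PP|_B)$. This holds because the clusters of the Adams output are exactly the blocks produced at some stage of the recursion, and the moment such a block $B$ is produced the algorithm defines the subtree below it by running Adams on $\PP|_B$. I would first verify this lemma carefully from the recursive definition, paying attention to the fact that at an intermediate stage we are running Adams on $(\PP|_{B'})|_B = \PP|_B$ for whatever parent block $B'$ we are inside.

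Given that lemma, extension stability for clusters (I) is immediate: the hypothesis directly says $Y$ is a cluster of $\varphi_{Ad}(\PP)$, so $\varphi_{Ad}(\PP)|_Y$ equals the subtree below $Y$, which equals $\varphi_{Ad}(\PP|_Y)$; the required refinement then reduces to equality. For (II), I would first observe that Adams is Pareto on clusters: this follows from its being Pareto on rooted triples (stated in the excerpt) together with the characterisation that $Y$ is a cluster of $T$ iff $T$ displays $ab|c$ for all $a,b\in Y$, $c\notin Y$. Thus if $Y$ is a cluster of every $T_i$ it is a cluster of $\varphi_{Ad}(\PP)$ as well, and (I) applies.

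For binary extension stability, the strategy is induction on $|Y|$ to show that whenever $\varphi_{Ad}(\PP|_Y)$ is binary, every $T_i|_Y$ refines $\varphi_{Ad}(\PP|_Y)$. The inductive step is the heart of the argument. If the top-level Adams partition of $\PP|_Y$ is $\{A,B\}$ (two blocks, since the output is binary), then each $T_i|_Y$ must split at its root into maximal clusters $A$ and $B$ exactly: all elements of $A$ must lie in a common maximal cluster of each $T_i|_Y$ (otherwise the intersection of maximal clusters would split $A$ further), the same holds for $B$, and these two maximal clusters are forced to be distinct and to equal $A$ and $B$, since the alternative would collapse the root of $T_i|_Y$ to having a single child. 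The recursive subtree lemma applied to $\varphi_{Ad}(\PP|_Y)$ now gives that $\varphi_{Ad}(\PP|_A)$ and $\varphi_{Ad}(\PP|_B)$ are themselves binary (being subtrees of a binary tree), so the inductive hypothesis yields $\varphi_{Ad}(\PP|_A)\preceq T_i|_A$ and $\varphi_{Ad}(\PP|_B)\preceq T_i|_B$, hence $\varphi_{Ad}(\PP|_Y)\preceq T_i|_Y$. Consequently every rooted triple of $\varphi_{Ad}(\PP|_Y)$ is displayed by every $T_i$, and Pareto on triples promotes it to $\varphi_{Ad}(\PP)$. So $\varphi_{Ad}(\PP)|_Y$ refines the binary tree $\varphi_{Ad}(\PP|_Y)$, which forces equality.

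The step I expect to need most care is the ``two blocks force an $(A\mid B)$ split'' observation in the binary case: one must rule out a maximal cluster of some $T_i|_Y$ that straddles $A$ and $B$, which relies on the non-degeneracy convention that every internal node of a rooted phylogenetic tree (and in particular the root of the restricted tree, after suppression of degree-two vertices) has at least two children. Once that structural point is pinned down, the rest is bookkeeping through the Adams recursion.
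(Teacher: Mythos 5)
Your proposal is correct and follows essentially the same route as the paper: clusters (I) and (II) are handled exactly as in the paper (the recursive ``subtree'' identity $\varphi_{Ad}(\PP)|_Y=\varphi_{Ad}(\PP|_Y)$ for clusters $Y$ of the output, plus Pareto on clusters), and your inductive argument that a binary $\varphi_{Ad}(\PP|_Y)$ forces each $T_i|_Y$ to refine (hence equal) it is just a repackaging of the paper's converse-unanimity fact that a binary Adams tree forces all input trees to be identical, after which both arguments finish with Pareto on rooted triples.
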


\section{No natural consensus method is extension stable in general}

We turn now to our  main  result, which is the following Arrow-type theorem.
\begin{thm}
\label{mainthm}
No regular consensus method $\varphi$ is extension stable on profiles of two trees.
\end{thm}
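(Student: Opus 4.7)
The plan is to argue by contradiction, assuming $\varphi$ is a regular consensus method that is extension stable on every two-tree profile. As the paper records, unanimity together with extension stability forces $\varphi$ to be Pareto on rooted triples, and I would use this throughout.

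I would first classify the possible behaviours of $\varphi$ on two-tree profiles in $RP(\{a,b,c\})^2$. Unanimity handles constant profiles. For a profile $(ab|c, ac|b)$ of two distinct binary triples, the transposition $(b\,c)$ swaps the two trees, so anonymity and neutrality force the output to be fixed by $(b\,c)$; only the star and $bc|a$ qualify. Neutrality under cyclic permutations of $\{a,b,c\}$ links the three such disagreement profiles, so uniformly either all output the star (``Case A'') or each outputs the ``third'' triple (``Case B''). The same swap-and-fix argument on a $(\text{binary}, \text{star})$ profile leaves two sub-cases A1 (output the binary) and A2 (output the star) inside Case A.

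To rule out Case B I take the explicit five-leaf witness $T_1 = (((a,b),c),(d,e))$ and $T_2 = ((a,b),((c,d),e))$. On $\{a,d,e\}$ both trees display $de|a$, so Pareto forces $\varphi(T_1,T_2)$ to display $de|a$; on $\{a,c,e\}$ the profile $(ac|e, ce|a)$ forces $ae|c$ under Case B; on $\{c,d,e\}$ the profile $(de|c, cd|e)$ forces $ce|d$. If $u_a, u_c, u_d$ denote the smallest ancestors of $e$ in the putative output tree that contain $a, c, d$ respectively, these three triples demand $u_d \prec u_a$, $u_a \prec u_c$ and $u_c \prec u_d$, a cyclic impossibility. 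So Case B cannot hold.

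Case A is handled by constructing, in each sub-case, an explicit two-tree profile whose extension-stability implications (from three- and four-leaf restrictions) combined with the anonymity/neutrality invariance of the profile leave no admissible output tree. In A1 (``loose on triples'') many triples are already forced at the three-leaf level; choosing the profile so that the union of these with the recursive constraints obtained from a four-leaf sub-profile is not tree-compatible yields the conflict. In A2 (``strict on triples'') three-leaf restrictions force very few triples, so one uses a highly symmetric pair such as $T_1 = (((a,b),(a',b')),x)$ and $T_2 = (((a,b'),(a',b)),x)$ (augmented if needed), whose large invariance group, generated by $(a\,a')$, $(b\,b')$ and $(a\,b)(a'\,b')$, together with recursive extension stability at the four-leaf level pins $\varphi$'s output onto a single tree whose restriction to an interior four-subset then violates extension stability. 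The main technical obstacle is sub-case A2: three-leaf extension stability is too weak to force useful clusters, and the contradiction has to be extracted from the interplay between the symmetry group of the chosen profile and the recursive four-leaf instance of extension stability.
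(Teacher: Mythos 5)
Your opening reduction is fine, and your elimination of Case~B is correct and self-contained: the five-leaf pair $T_1=(((a,b),c),(d,e))$, $T_2=((a,b),((c,d),e))$ really does force the incompatible cyclic set $de|a$, $ae|c$, $ce|d$ if the consensus of two conflicting triples were the ``third'' triple. But the theorem's actual difficulty sits entirely inside your Case~A, and there the argument is only a declaration of intent. In sub-case A2 the hypothetical method behaves at the three-leaf level exactly like strict consensus (and Adams and Aho behave this way too on a (binary, star) pair), and such methods are regular and extension stable on all small instances; so no amount of three-leaf bookkeeping can produce the contradiction, and your candidate profile $T_1=(((a,b),(a',b')),x)$, $T_2=(((a,b'),(a',b)),x)$ does not do the job as stated: a method returning the strict consensus $((a,a',b,b'),x)$ there, and the star on the restriction to $\{a,a',b,b'\}$ (which is what A2 predicts, since every three-leaf restriction of that four-leaf sub-profile is a conflicting pair), violates nothing --- each unanimous triple $pq|x$ is displayed by $((a,a',b,b'),x)$, and a star refines into anything. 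The claim that symmetry ``pins the output onto a single tree whose restriction then violates extension stability'' is exactly the step that needs a construction and a verification, and neither is given. Sub-case A1 has the same problem: you never exhibit the profile, and one can check that triples forced purely at the three-leaf level by a two-tree profile cannot form the needed incompatible cycle (a tree displaying two of three cyclically incompatible triples is itself resolved, and resolved conflictingly, on the third triple's leaf set, so that triple is no longer forced under A1), so A1 also needs a deeper mechanism than ``choose the profile so that the union is not tree-compatible.''

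The missing idea is the paper's Lemma~\ref{bryantlem}, which is precisely the tool that defeats strict-like (unresolved) behaviour: one embeds the four-leaf profile $\PP$ of Fig.~\ref{f:bryfig} into a five-leaf profile $\PP^+$ whose restrictions to $\{a,b,c,d\}$ and $\{a,b,c,e\}$ are, after the relabelling $(a,b,c,d)\mapsto(c,b,a,e)$ and a reordering, images of one another; unanimity on $\{a,b,d\}$ together with anonymity and neutrality then shows that if $\varphi(\PP)$ lacked the cluster $abc$, the two restricted consensus trees would resolve the triple $a,b,c$ oppositely and hence be incompatible, contradicting extension stability applied to $\PP^+$. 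This forces a nontrivial cluster in the consensus without any case analysis on how $\varphi$ treats small conflicts. The theorem then follows from the eight-leaf two-tree profile of Fig.~\ref{f:theseus}, where four applications of the lemma force $c|ax$, $c|bx$, $a|by$, $a|cy$, hence both $c|ab$ and $a|bc$, in $\varphi(\PP)$. Until you supply (and verify) an analogue of this lemma, or explicit profiles with checked contradictions for A1 and A2, your proof has a genuine gap at its central point; as the paper's Theorem~\ref{posthm} and the remark following the main proof indicate, any such witness must involve trees of height at least~3, so the contradiction cannot be reached from three- and four-leaf reasoning alone.
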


In order to prove this theorem, we first require a lemma.

\begin{lem}\label{bryantlem}
Suppose that $\varphi$ is regular consensus method which is extension stable and let $\PP$ be the profile of the two trees $T_1, T_2$ in Fig.~\ref{f:bryfig}.  In that case, $abc$ is a cluster of $\varphi(\PP)$.
\end{lem}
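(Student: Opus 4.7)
The plan is to force the cluster $\{a,b,c\}$ in $\varphi(\PP)$ by first controlling the consensus on every four-leaf subset containing $\{a,b,c\}$, using extension stability to pull these local pictures back to $X$. For each $d \in X \setminus \{a,b,c\}$ set $Y_d = \{a,b,c,d\}$. If I can show that $\{a,b,c\}$ is a cluster of $\varphi(\PP|_{Y_d})$ for every such $d$, then extension stability gives $\varphi(\PP|_{Y_d}) \preceq \varphi(\PP)|_{Y_d}$, so $\{a,b,c\}$ is a cluster of $\varphi(\PP)|_{Y_d}$; this says the most recent common ancestor of $\{a,b,c\}$ in $\varphi(\PP)$ has no descendant in $\{d\}$, and letting $d$ range over all of $X \setminus \{a,b,c\}$ concludes that $\{a,b,c\}$ is a cluster of $\varphi(\PP)$.

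The heart of the argument is therefore to determine $\varphi(\PP|_{Y_d})$ using only the regularity axioms. I would exploit the symmetry I expect Fig.~\ref{f:bryfig} to exhibit: $T_1$ and $T_2$ should be interchanged (or individually preserved) by a nontrivial permutation $\sigma$ of $Y_d$ that fixes $d$ and acts as a transposition on $\{a,b,c\}$, so that $(\PP|_{Y_d})^\sigma$ coincides with $\PP|_{Y_d}$ up to reordering. Combining neutrality with anonymity then yields $\varphi(\PP|_{Y_d})^\sigma = \varphi(\PP|_{Y_d})$, and the only trees in $RP(Y_d)$ invariant under such a $\sigma$ are the star tree on $Y_d$ and the tree with $\{a,b,c\}$ as its unique non-trivial cluster. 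In either case, $\{a,b,c\}$ is a cluster of $\varphi(\PP|_{Y_d})$, which is exactly what Step one requires. If the star possibility is the worrying edge case, it can be killed by noting that if $T_1|_{Y_d}$ and $T_2|_{Y_d}$ share a common non-trivial cluster or a common displayed rooted triple, the Pareto-on-triples property (a consequence of unanimity and extension stability, as noted immediately before the Lemma) rules out the star.

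The main obstacle is the uniformity of Step one across all choices of $d$: the required permutation $\sigma$ must be provided by the structure of the figure for every single $d \in X \setminus \{a,b,c\}$. I would expect $T_1$ and $T_2$ in Fig.~\ref{f:bryfig} to be designed precisely so that this works---most likely $T_2$ is obtained from $T_1$ by a permutation of the leaves that restricts, for each $d$, to a permutation of $Y_d$ fixing $d$ and acting nontrivially on $\{a,b,c\}$. Any asymmetric $d$ would have to be handled separately, by identifying a cluster or rooted triple that is common to $T_1|_{Y_d}$ and $T_2|_{Y_d}$ and contains $\{a,b,c\}$, and invoking the Pareto-on-triples consequence of extension stability plus unanimity to still place $\{a,b,c\}$ together in $\varphi(\PP|_{Y_d})$. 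Once all $d$ are dealt with, the assembly in the first paragraph is automatic.
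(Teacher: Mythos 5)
There is a genuine gap: your argument rests on a symmetry that the profile of Fig.~\ref{f:bryfig} does not have, and it omits the one idea the proof actually needs. The trees $T_1,T_2$ of the lemma live on the four-leaf set $\{a,b,c,d\}$ (they are $((ab),c,d)$ and $(((bc),a),d)$, the restrictions of $\PP^+$ to $\{a,b,c,d\}$), so your reduction to four-leaf subsets $Y_d$ is circular here, and more importantly there is no permutation of $\{a,b,c,d\}$ fixing $d$ and transposing two of $a,b,c$ that maps the unordered pair $\{T_1,T_2\}$ to itself: the two trees have different shapes, and each candidate transposition destroys a cluster of one of them. So neutrality plus anonymity give you no invariance of $\varphi(\PP)$ at all. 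Even if such a $\sigma$ existed, your classification of $\sigma$-invariant trees is wrong: a tree on $\{a,b,c,d\}$ invariant under the transposition $(ab)$ may have $\{a,b\}$, $\{c,d\}$ or $\{a,b,d\}$ as clusters, not only the star tree or the tree with unique nontrivial cluster $\{a,b,c\}$. Regularity applied to $\PP$ alone cannot rule out, say, $\varphi(\PP)=((ab),c,d)$, so no argument confined to the original leaf set (or to its subsets) can succeed.

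The paper's proof goes in the opposite direction: it goes \emph{up} to a larger taxon set rather than down. Assuming $abc$ is not a cluster of $\varphi(\PP)$, unanimity on the identical restrictions to $\{a,b,d\}$ together with extension stability forces $\varphi(\PP)$ to display $ab|d$, leaving exactly three candidates $T_A,T_B,T_C$, each of which resolves the triple $a,b,c$. One then applies the relabeling $(a,b,c,d)\mapsto(c,b,a,e)$ and reverses the order of the trees to obtain a profile $\PP'$ on $\{a,b,c,e\}$; neutrality and anonymity force $\varphi(\PP')$ to be the correspondingly relabeled candidate, which resolves $a,b,c$ incompatibly with $\varphi(\PP)$. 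The decisive step is that both $\PP$ and $\PP'$ arise as restrictions of the single five-leaf profile $\PP^+$ of Fig.~\ref{f:bryfig}, so extension stability makes the one tree $\varphi(\PP^+)$ display two incompatible resolutions of $a,b,c$ -- a contradiction. This construction of $\PP^+$ and the transfer of the assumed resolution to an incompatible mirror profile is the missing idea in your proposal; without it, the lemma does not follow from the regularity axioms.
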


\begin{proof} Suppose that $\varphi(\PP)$ does not contain the cluster $abc$ (we will derive a contradiction).
Restricting the profile to the leaves $\{a,b,d\}$, we have identical trees $ab|d$, so by unanamity $\varphi(\PP|_{\{a,b,d\}}) = ab|d$. Thus $\varphi(\PP)$ is one of the trees $T_A,T_B,T_C$ with $c$ placed as shown in Fig.~\ref{f:bryfig}.  Let $\PP'$ be the profile $T_1',T_2'$ obtained by mapping $(a,b,c,d)$ to $(c,b,a,e)$ and reversing the order of the trees. If $\varphi(\PP)$ equals $T_A$, $T_B$ or $T_C$ then, by unanimity (and anonymity), $\varphi(\PP')$ must equal $T_A'$, $T_B'$ or $T_C'$ respectively. In all three cases, $\varphi(\PP)$ and  $\varphi(\PP')$ resolve the triple $a,b,c$ differently, so the two consensus trees are incompatible.

Now consider the profile $\PP^{+}$ in Fig.~\ref{f:bryfig}, and observe that $\PP = \mathcal{P^+}|_{\{a,b,c,d\}}$ and  $\mathcal{P'} = \mathcal{P^+}|_{\{a,b,c,e\}}$. 
We have assumed that $\varphi$ is extension stable so that 
$\varphi(\PP^+)|_{\{a,b,c,d\}}$ refines $\varphi(\PP^+|_{\{a,b,c,d\}}) = \varphi(\PP)$, and
$\varphi(\PP^+)|_{\{a,b,c,e\}}$   refines $\varphi(\PP^+|_{\{a,b,c,e\}}) = \varphi(\PP')$.
This is  a contradiction, as $\varphi(\PP)$ and $\varphi(\PP')$ are incompatible.

\begin{figure}[ht]
\center
\includegraphics[width=8cm]{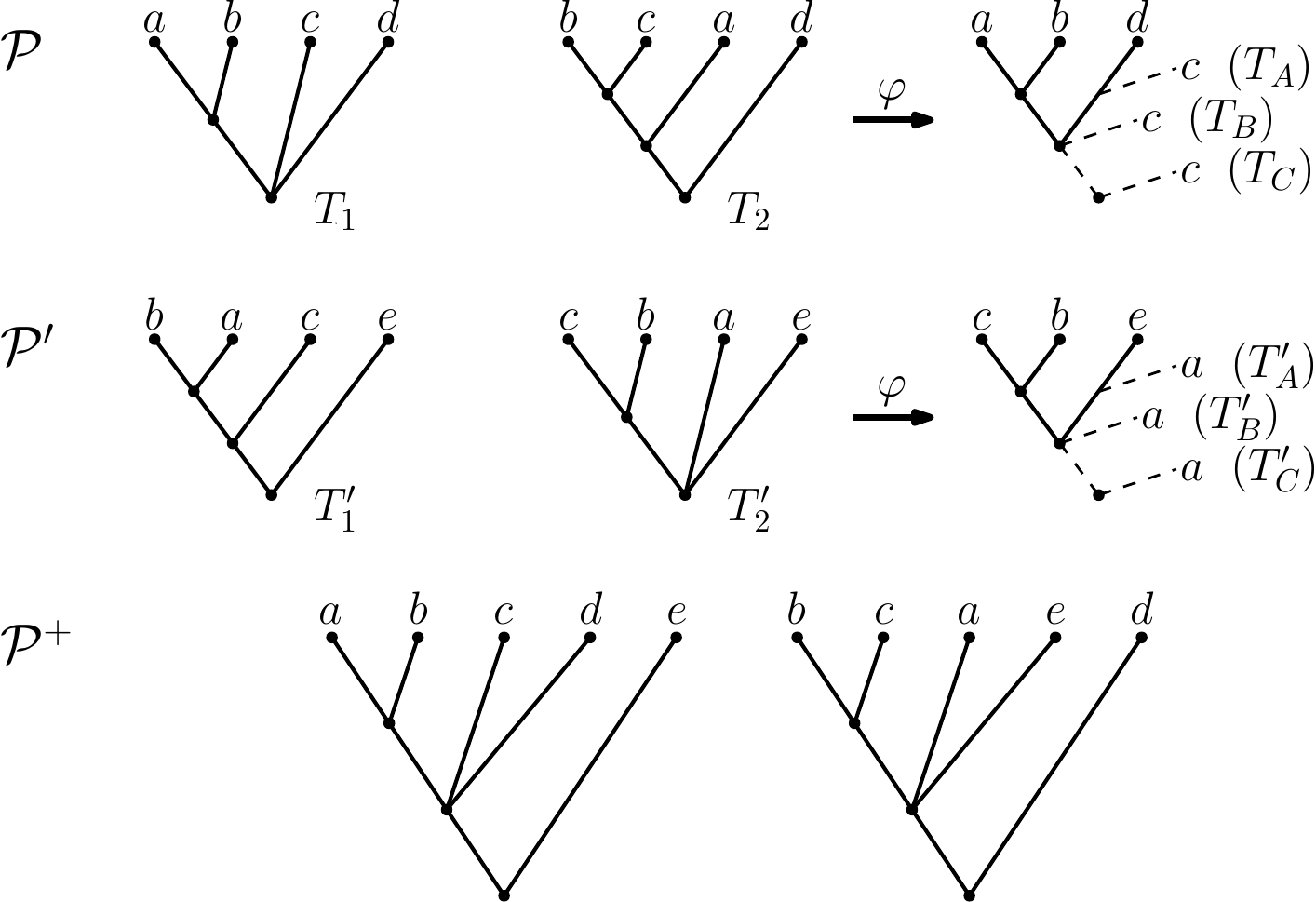}
\caption{Trees required for the proof of Lemma~\ref{bryantlem}.}\label{f:bryfig}
\end{figure}

This contradiction was based on the assumption that $\varphi(\PP)$ does not contain the cluster $abc$, thus $\varphi(\PP)$ contains this cluster.
\end{proof}

\begin{proof}[Proof of Theorem~\ref{mainthm}]
Consider the profile $\PP$ consisting of the two trees shown in Fig.~\ref{f:theseus},  suppose that
$\varphi(\PP)$ is a regular consensus method that satisfies extension stability, and let $T=\varphi(\PP)$ (we will derive a contradiction).

\begin{figure}[htb]
\center
\includegraphics[width=8.5cm]{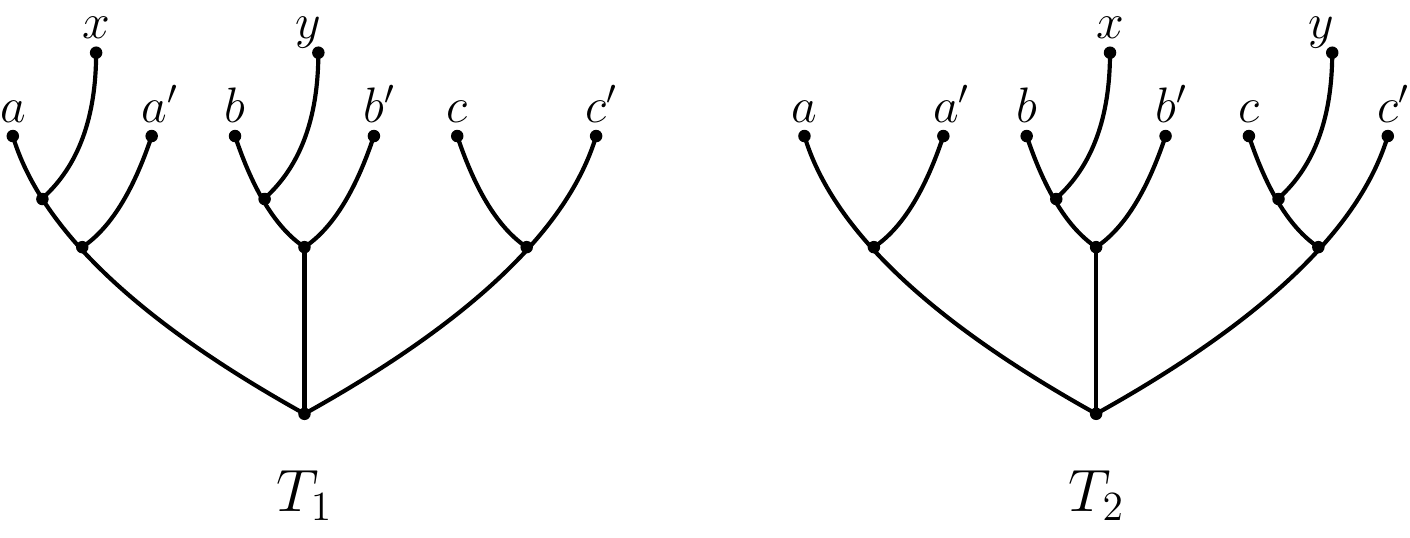}
\caption{The profile $\PP$ of two trees used in the proof by contradiction argument for establishing Theorem~\ref{mainthm}.}\label{f:theseus}
\end{figure}


We apply Lemma~\ref{bryantlem}
repeatedly.  For the leaf set $\{a,a',c, x\}$, we deduce that $\varphi(\PP|_{\{a,a', c,x\}})$ has the nontrivial cluster $aa'x$ and
so displays $c|ax$, while for
the leaf set $\{b,b',c,x\}$, the tree $\varphi(\PP|_{\{b,b', c,x\}})$ has the nontrivial cluster $bb'x$ and so displays $c|bx$.
 Consequently, by extension stability,  $c|ax$ and $c|bx$ must both be displayed by $T$, and therefore $c|ab$ is also displayed by $T$.

Repeating this argument for $y$, with the leaf set $\{a,b,b', y\}$, we deduce that  $\varphi(\PP|_{\{a,a', c,x\}})$ has the  nontrivial cluster $ybb'$
and so displays $a|by$, while for the leaf set $\{a,c,c',y\}$, the tree $\varphi(\PP|_{\{a,c,c', y\}})$ has the nontrivial cluster $ycc'$ and so displays 
$a|cy$. 
Consequently, by extension stability,  $a|by$ and $a|cy$ must both be displayed by $T$, and therefore $a|bc$ is also displayed by $T$.  However, this contradicts
the conclusion of the previous paragraph that $c|ab$ is displayed by $T$, and this contradiction implies that such a consensus method $\varphi$ exists cannot exist. 
\end{proof}


{\bf Remarks:}  The example used in the proofs of Theorem~\ref{mainthm} and Lemma~\ref{bryantlem} shows that the height bound  in Theorem~\ref{posthm} is tight.  That is, although there are regular consensus methods that are extension stable on an arbitrary number of trees of height at most 2, when we move to height 3,  even for just two trees,  extension stability can be impossible to achieve for any consensus method.

\section{Stability when trees rather than taxa are added}

\label{secondstab}
So far, the stability properties we have considered (extension stability and its variations/restrictions) involve adding taxa to extend a given profile of trees.   In this final section, we consider the taxa as being fixed and investigate a stability condition that applies as phylogenies are added to the profile.

Suppose a consensus tree $T$ has been built  from a sequence of input trees $\PP=(T_1, T_2, \ldots, T_{k-1})$ and then a new input tree $T_{k}$ becomes available.   A natural question now arises: Is the resulting consensus of these $k$ trees the same as simply taking the consensus of the most current consensus tree $T$ and the new tree $T_{k}$?  
For certain methods (e.g. strict consensus), this applies, whereas for others (e.g. majority rule), it can fail.

More formally, consider the following condition:
For all $k \geq 1$,  and  every profile $\PP = (T_1, \ldots, T_k)    \in RP(X)^k$, we have:     
\begin{equation}
\label{ass}
\varphi(\varphi(\PP - T_k), T_k) = \varphi(\PP),
\end{equation}
where $\PP-T_k$ is the profile obtained by removing tree $T_k$ from the last position in the profile to give a profile of length $(k-1)$. 
If $\varphi$ satisfies Property (\ref{ass}),  we say that it is {\em associatively stable}. 
Associative stability is computationally desirable, since all the information required to update a consensus tree when a new input tree arises is the current consensus tree, not the list of previous input trees used to produce it.

Notice that any consensus method $\varphi$ on $RP(X)$ can be regarded as a binary operation $\circ$ on this set, by
letting $T \circ T' := \varphi(T, T')$. If $\varphi$ is a regular consensus method, this operation is both commutative and (by unanimity) it is also idempotent (i.e. $T \circ T=T$) and, if  $\varphi$ is also associatively stable, then the operation $\circ$ is associative (i.e. $(T \circ T') \circ T'' = T \circ (T' \circ T'')$).    Note that the associativity of $\circ$ is a weaker condition than associative stability.

Moreover, any regular consensus method that is associatively stable  has the property that for 
any profile $\PP  \in RP(X)^k$ the following product
$$
T_1 \circ T_2 \circ \cdots \circ T_k,
$$
is well-defined, takes the same value for any ordering $T_1, T_2, \ldots T_k$ of the trees in $\PP$, and can be computed by repeatedly taking pairwise consensus operations on  trees (placing the brackets arbitrarily).

The proof of the following result is provided in the Appendix.
\begin{lem}
\label{natlem}
If $\varphi$ is a regular consensus method that is associatively stable then $\varphi(\PP)$ depends only on the set of trees present in $\PP$ and not on their frequency.
\end{lem}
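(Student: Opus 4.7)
The plan is to leverage the commutative--associative--idempotent structure of the pairwise operation $T \circ T' := \varphi(T, T')$ that was essentially derived in the paragraph immediately preceding the lemma. That discussion tells us that for a regular, associatively stable $\varphi$, the iterated product $T_1 \circ T_2 \circ \cdots \circ T_k$ is well-defined independently of the ordering of the trees (by anonymity) and of the bracketing (by associative stability), and equals $\varphi(T_1,\ldots,T_k)$. Unanimity supplies the final algebraic ingredient: $T \circ T = \varphi(T,T) = T$, so $\circ$ is idempotent.

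Given this algebraic setup, the proof reduces to showing that a single duplicate can be deleted. I would suppose that some tree $T$ appears at least twice in $\PP$; using anonymity, I reorder so that two copies of $T$ occupy the last two positions, writing $\PP = (T_1, \ldots, T_{k-2}, T, T)$. Bracket-independence then allows me to compute
\[
\varphi(\PP) \;=\; T_1 \circ \cdots \circ T_{k-2} \circ (T \circ T) \;=\; T_1 \circ \cdots \circ T_{k-2} \circ T \;=\; \varphi(T_1, \ldots, T_{k-2}, T),
\]
where the middle equality is idempotence and the last is bracket-independence read in reverse. Iterating this reduction one duplicate at a time collapses $\PP$ to a profile whose underlying multiset is exactly the set of distinct trees appearing in $\PP$, and anonymity removes any dependence on the remaining ordering.

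The main obstacle does not really lie in the lemma itself but in the earlier work referenced just before its statement: one needs to check that anonymity together with associative stability genuinely yields bracket-independence of the iterated product. This is a short inductive argument, using associative stability in the form $(T \circ T') \circ T'' = \varphi(\varphi(T,T'),T'') = \varphi(T, T', T'')$ together with anonymity to permute factors so that the same identity applies to any adjacent pair. Once that fact is in hand, Lemma~\ref{natlem} is essentially a one-line consequence of the idempotence of $\circ$.
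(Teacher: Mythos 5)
Your proposal is correct and follows essentially the same route as the paper's proof: use associative stability to write $\varphi(\PP)$ as an iterated $\circ$-product, anonymity to reorder the factors, and unanimity (idempotence) to collapse repeated trees. The only cosmetic difference is that you delete duplicates one at a time via $T \circ T = T$, whereas the paper groups all $n_i$ copies of each distinct tree and collapses them in one step via unanimity.
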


It follows that consensus methods like majority rule are not associatively stable. 

 Although strict consensus satisfies associative stability, loose consensus fails to have this property. A simple example is provided by
 the profile $\PP = (ab|c, ac|b, bc|a)$ for which $bc$ is a cluster of $\varphi(\varphi((ab|c, ac|b)), bc|a)$ but $bc$ is not a cluster of
 $\varphi(\PP)$

\begin{prop}
Adams consensus is associatively stable for profiles of trees having height 2, but not for profiles of trees of height four. Aho consensus fails to be associatively stable even for profiles of trees having height  2. 
\end{prop}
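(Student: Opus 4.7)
The plan is to handle the three assertions separately, exploiting the recursive definitions of Adams and Aho consensus given in the preliminaries.

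For Adams on height-$2$ profiles: a level-$2$ tree is determined by its partition of maximal clusters, and Adams on such inputs produces the tree whose maximal-cluster partition is the block-wise intersection of the input partitions (within each intersected block, every restricted input tree is a star, so the recursion terminates at once). Writing $\Pi_i$ for the partition of $T_i$, the identity $(\Pi_1 \cap \cdots \cap \Pi_{k-1}) \cap \Pi_k = \Pi_1 \cap \cdots \cap \Pi_k$ is precisely the associative-stability equation~(\ref{ass}).

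For Adams failing on height-$4$ profiles, the argument is inductive. At each level the top partition of Adams is the set-theoretic intersection of max-cluster partitions and is therefore unaffected by bracketing, so any discrepancy between $\varphi_{Ad}(\varphi_{Ad}(T_1, T_2), T_3)$ and $\varphi_{Ad}(T_1, T_2, T_3)$ must arise in the recursion inside some common top block $B$: the simultaneous side computes $\varphi_{Ad}(T_1|_B, T_2|_B, T_3|_B)$, while the pairwise side computes $\varphi_{Ad}(\varphi_{Ad}(T_1|_{B'}, T_2|_{B'})|_B, T_3|_B)$, where $B' \supseteq B$ is the enclosing top block of $\varphi_{Ad}(T_1, T_2)$. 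These agree whenever Adams at the relevant inner height satisfies both associative stability and the strong extension-stability identity $\varphi_{Ad}(T_1|_{B'}, T_2|_{B'})|_B = \varphi_{Ad}(T_1|_B, T_2|_B)$. At height $2$ both properties hold (the second is an immediate partition calculation and the first is the height-$2$ case above), so the induction propagates at height $3$; at height $4$, however, the inner trees are of height $3$, where Adams fails extension stability, as witnessed by Figure~\ref{f:adams.binary}. To exhibit a concrete counterexample I would take two height-$3$ trees $U_1, U_2$ from that figure together with a subset $Z$ on which $\varphi_{Ad}(U_1, U_2)|_Z \ne \varphi_{Ad}(U_1|_Z, U_2|_Z)$, form height-$4$ trees $T_1 = (U_1, w)$ and $T_2 = (U_2, w)$ by attaching a new pendant leaf $w$, and design a third tree $T_3$ whose top-level partition splits the leaves so that the ensuing inner recursion is restricted precisely to $Z$. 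The main obstacle is this choice of $T_3$: because Adams also satisfies the restricted variants of extension stability recorded in Theorem~\ref{ad2} (binary extension stability and both cluster versions), $T_3$ must be arranged so that the relevant top block $B$ is not a cluster of any $T_i$ and the inner problem is not binary, lest one of those variants rescue the equality; once a suitable $T_3$ is found, the inequality $\varphi_{Ad}(\varphi_{Ad}(T_1, T_2), T_3) \ne \varphi_{Ad}(T_1, T_2, T_3)$ is verified by a direct (if somewhat fiddly) computation.

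For Aho failing on height-$2$ profiles, I would exhibit the explicit three-tree profile $T_1 = ((a,b), c, d)$, $T_2 = ((a,b,c), d)$, $T_3 = ((a,b,d), c)$ on the leaf set $\{a,b,c,d\}$. The triples displayed are $\{ab|c, ab|d\}$, $\{ab|d, ac|d, bc|d\}$ and $\{ab|c, ad|c, bd|c\}$, respectively. Only $ab|d$ is shared by $T_1$ and $T_2$, giving $\varphi_{Ah}(T_1, T_2) = ((a,b), c, d)$; this intermediate tree shares $ab|c$ with $T_3$, so $\varphi_{Ah}(\varphi_{Ah}(T_1, T_2), T_3) = ((a,b), c, d)$. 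However, the three-way triple intersection is empty, so $\varphi_{Ah}(T_1, T_2, T_3)$ is the star on $\{a,b,c,d\}$, and the two outputs differ.
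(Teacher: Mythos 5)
Your first and third claims are in good shape. The height-$2$ argument for Adams is essentially the paper's proof (the top partition of $\varphi_{Ad}$ is the blockwise intersection of the input partitions, intersection of sets is associative, and below the top level everything is a star), and your Aho counterexample is correct and complete: with $T_1=((ab),c,d)$, $T_2=((abc),d)$, $T_3=((abd),c)$ the pairwise triple intersections are $\{ab|d\}$ and then $\{ab|c\}$, giving $\varphi_{Ah}(\varphi_{Ah}(T_1,T_2),T_3)=((ab),c,d)$, while the three-way intersection is empty so $\varphi_{Ah}(T_1,T_2,T_3)$ is the star. This is a genuinely different witness from the paper's, which uses the six-leaf profile $((abcd),(ef)),((cdef),(ab)),((abef),(cd))$; yours is smaller and just as valid.

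The gap is in the middle claim. The statement ``Adams is not associatively stable for height-$4$ profiles'' is an existence claim, and the paper proves it by exhibiting an explicit three-tree, height-$4$ profile (Fig.~\ref{f:both_fail}) on which $\varphi_{Ad}(\varphi_{Ad}(T_1,T_2),T_3)\neq\varphi_{Ad}(T_1,T_2,T_3)$. You instead give a heuristic for why a counterexample \emph{should} exist at height $4$ (the inner recursion sits on height-$3$ restrictions, where extension stability can fail) and then explicitly defer the key step: $T_3$ is never specified and the inequality is never checked, so nothing is actually proved. Note also that failure of extension stability inside a block does not by itself force the two bracketings to differ --- as you yourself observe, the restricted stability properties of Theorem~\ref{ad2} can rescue equality, and the two sides could agree for other accidental reasons --- so the ``fiddly computation'' you postpone is precisely the content of this part of the proposition. (A further small point to check in your plan: the binary five-leaf trees of Fig.~\ref{f:adams.binary} need not have height $3$; a caterpillar has height $4$, in which case attaching a pendant leaf produces height $5$ and leaves the stated height-$4$ bound unestablished.) To close the gap you must write down a concrete profile of height-$4$ trees and verify the two Adams computations disagree, as the paper does. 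Your incidental reduction identity --- that for height-$2$ inputs $\varphi_{Ad}(T_1|_{B'},T_2|_{B'})|_B=\varphi_{Ad}(T_1|_B,T_2|_B)$, so that associativity at height $3$ propagates from height $2$ --- is a nice observation not in the paper, but it is motivation rather than proof of the negative claim.
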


\begin{proof}
Adams consensus is associatively stable for any profile $\PP$ of trees of height 2, since the nontrivial clusters of $\varphi_{Ad}(\PP)$ consist of the nonempty intersections of the nontrivial clusters of the trees in $\PP$, and intersection is an associative operation on sets. 

However, Adams consensus is not associative (and so not associatively stable) in general, by virtue of the example provided in Fig.~\ref{f:both_fail}, which involves three trees of height 4. 

Aho consensus also fails to be associative for the profile in Fig.~\ref{f:both_fail}, but a simpler example shows that this method also fails associativity on trees of height 2.  Consider the profile   $\PP = (T_1=((abcd), (ef)), T_2=((cdef), (ab)), T_3=((abef), (cd)))$ consisting of  three trees of height 2.  We have
$\varphi_{Ah}(T_1, T_2) = ((ab), (ef), c, d)$ and so $\varphi_{Ah}(\varphi_{Ah}(T_1, T_2), T_3) = ((ab), (ef), c,d)$; however as we noted earlier, $\varphi_{Ah}(\PP)$ is the star tree. 
Thus Aho consensus fails to be associative (and thereby associatively stable) on trees of height 2.
\end{proof}

\mbox{ }

\begin{figure}[ht]
\center
\includegraphics[width=8cm]{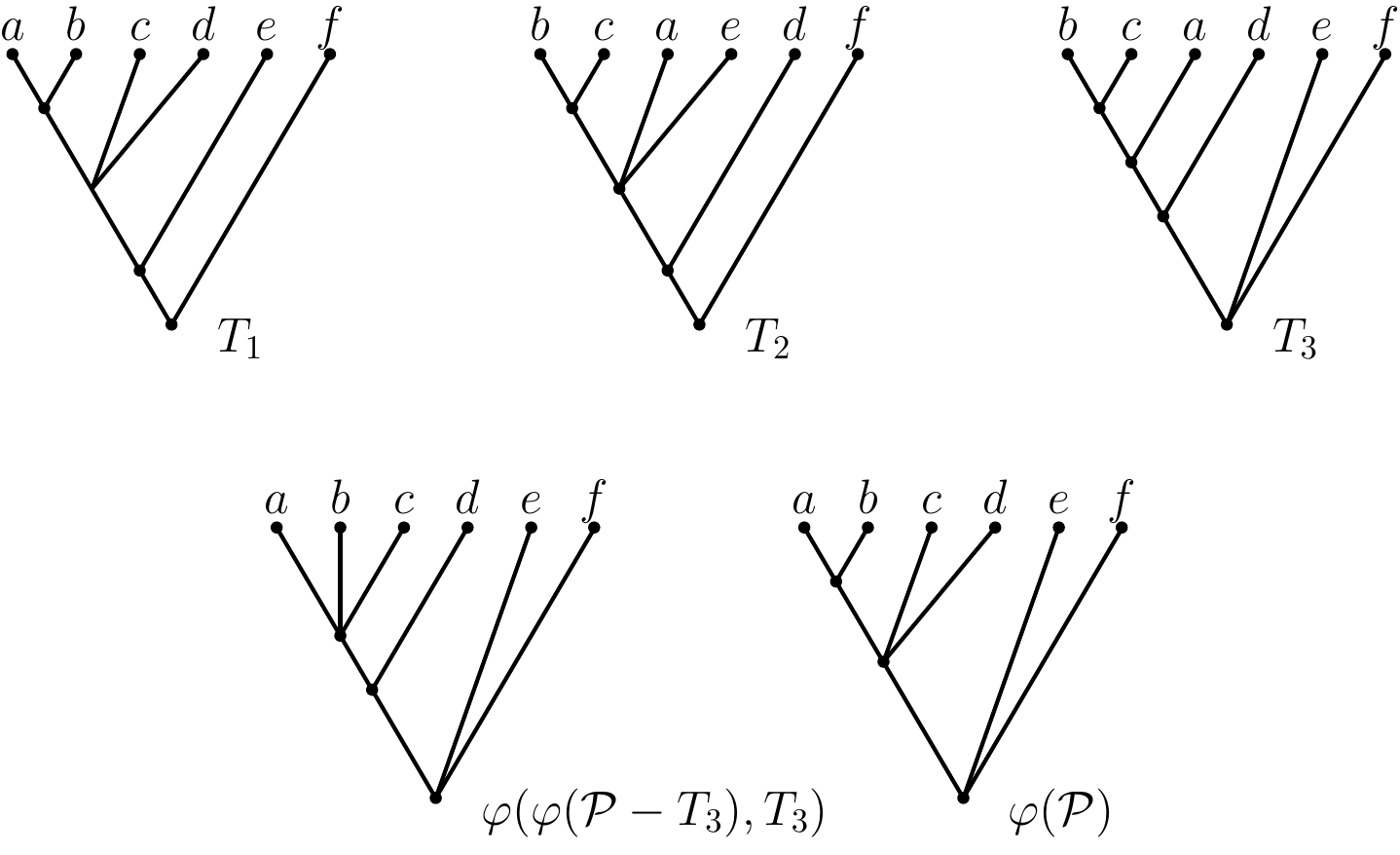}
\caption{A profile of three trees on which both Adams consensus (and also Aho consensus) fail to be associative (and hence are not associatively stable).}
\label{f:both_fail}
\end{figure}

\section{Conclusing comments}

We have explored two notions of stability for consensus methods:  one where taxa are added, and the other where trees are added. In the former setting we have answered a question left open by \cite{15vel}: not only are no current consensus methods extension stable in general, but it is impossible to design one that is while retaining the usual (`regular') properties expected of any consensus method (Theorem~\ref{mainthm}).    In other words, whichever method is used to combine trees, it may still be necessary to revise phylogenetic relationships in future, even when the new trees on the taxa they originally considered agree with the original trees on which those relationships were established.  However, if the new taxa lie outside a cluster (monophyletic clade) that  appears in the consensus tree, then existing relationships can be preserved (Theorem~\ref{ad2}), at least for certain consensus methods.

We also investigated the stability of consensus methods as trees, rather than taxa, are added. Associative stability is the requirement that the current consensus tree provides a sufficient summary of the earlier input trees when we wish to update our phylogeny when presented with a new input tree,  rather than having to recompute the consensus afresh using all the input trees. Certain methods (e.g. strict consensus) satisfy associative stability, however we showed that certain other methods fail to satisfy it except in special cases.  

Because our results in these sections are stated quite generally, they apply to the performance of consensus methods across a wide range of applications in systematic biology, including inferring species trees from gene trees \cite{7deg09}, summarizing the posterior distribution  of trees in Bayesian phylogenetics \cite{10hol08}, and inferring a species tree from a collection of species trees obtained from different studies. 

Our results also raise a number of interesting questions for further work which we now discuss.      

   We saw  from Theorem~\ref{posthm} that Adams and Aho consensus are extension stable for trees of height at most 2. However, Adams is not extension stable for trees of greater height, even when the trees are binary.  An example is the profile in Figure~\ref{f:adams.binary}.  
For this example, the Aho consensus method does not violate extension stability; however, it is easy to find two non-binary trees and a subset $Y$ for which it does
 (e.g. the profile $\PP^+$ in Fig.~\ref{f:bryfig}, which consists of the trees $(((ab), c,d), e)$ and $(((bc), a,e), d)$ and $Y=\{a,b,c,d\}$). Moreover, by resolving the polytomies in each tree in two different ways,  we obtain a profile of four binary trees on which Aho consensus violates extension stability for the same set $Y$.

\begin{figure}[ht]
\center
\includegraphics[width=8cm]{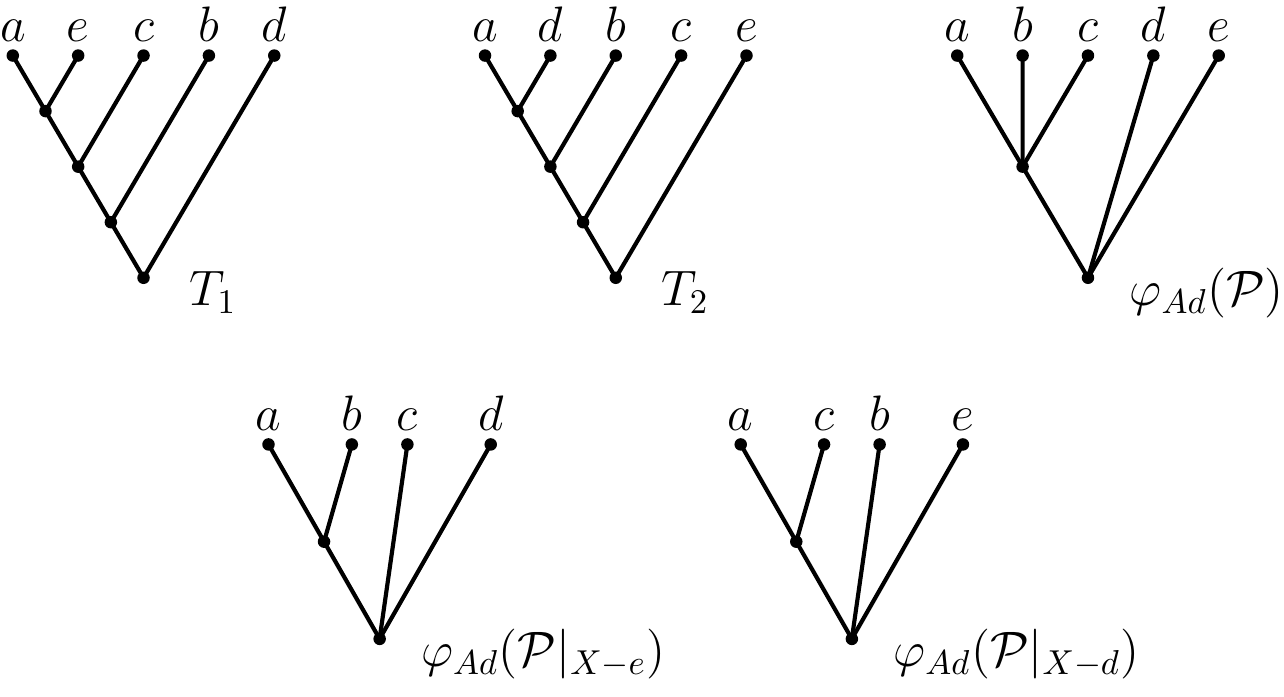}
\caption{Profile of two binary trees (top left) on which  Adams consensus  fails to be extension stable. The Adams consensus for this profile (top right) has one nontrivial cluster ($abc$) whereas the Adams consensus trees restricted to two subsets (bottom) display the rooted trees $ab|c$ and $b|ac$ respectively (and so are not compatible with any tree). }
\label{f:adams.binary}
\end{figure}

These observations, together with the results in the previous sections, suggest our first  question:

\begin{itemize}
\item 
 Is there a regular consensus method that  satisfies extension stability when the method is restricted to profiles of binary trees? Or even just to profiles of {\em two} binary trees?
  \end{itemize}

Regarding associative stability, we have seen that although some methods, such as strict consensus, satisfy this property, other methods (including loose consensus, majority consensus, Adams and Aho consensus) fail to be associative.  This raises our second interesting  open question. 

\begin{itemize}
\item 
Is there a regular consensus method that satisfies associative stability and is Pareto on rooted triples?  
\end{itemize}

  In addition to these questions, further topics that may be useful to consider are extensions of the concepts and results we have described  to allow profiles of trees in which each tree comes with given branch lengths and/or a weight (e.g. a posterior probability assignment).  One step in this direction has been provided in a recent study (\cite{vu16}) into the consistency and stability of on-line updating of Bayesian posterior distributions on trees as new data is obtained.

 Finally, certain consensus methods not discussed here, such as  greedy consensus \cite{7deg09} may involve a randomisation step, such as breaking a tie arbitrarily.  Strictly speaking, such methods are not  consensus methods (i.e. functions) as we have defined them, however all the axioms we have studied could be rephrased as properties that apply when a consensus tree is a random variable whose distribution is determined by the input trees.



\section*{Funding}
We thank the University of Canterbury Erskine Fund for financial support for this project.  

\section*{Acknowledgments}


We thank F. R. McMorris and Peter Lockhart for advice regarding background literature.  
We also thank F. R. McMorris and a second (anonymous) reviewer for numerous helpful suggestions.

\section{Appendix}

\begin{lem}
The condition $\varphi(\PP \cdot \varphi(\PP)) = \varphi(\PP)$ holds for strict consensus, loose consensus, majority rule, Adams consensus and Aho consensus.
\end{lem}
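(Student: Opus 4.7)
The plan is to handle each of the five methods in turn, grouping them by the kind of characterisation of their output clusters. Throughout, write $T = \varphi(\PP)$ and $\PP' = \PP \cdot T$, and for a cluster $C$ let $f_\PP(C)$ denote the number of trees of $\PP$ containing $C$. In all cases I will show that the set of nontrivial clusters of $\varphi(\PP')$ equals the set of nontrivial clusters of $T$, which (since both are hierarchies on $X$) forces the two trees to coincide.

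For the three \emph{cluster-vote} methods (strict, majority, loose), the argument is direct. For strict consensus, the common clusters of $\PP'$ are a subset of the common clusters of $\PP$, and conversely any cluster common to all trees in $\PP$ appears in $T$ by definition, hence in every tree of $\PP'$; equality follows. For majority rule, use $f_{\PP'}(C) = f_\PP(C) + \mathbf{1}[C \in T]$ together with the elementary observation that if $f_\PP(C) > k/2$ then $f_\PP(C)+1 > (k+1)/2$, while if $f_\PP(C)\le k/2$ then $f_\PP(C) \le (k+1)/2$; in both cases the majority status of $C$ is preserved. For loose consensus, suppose $C \in T$: then $C$ lies in some tree of $\PP$ and is compatible with every tree of $\PP$, hence also with $T$ (as $C \in T$), so $C$ survives in $\PP'$. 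Conversely, any cluster of $\varphi(\PP')$ either occurs in some tree of $\PP$ (and is compatible with all of them, so already lay in $T$) or occurs only in $T$; in the latter case its presence in $T$ together with its compatibility with every $T_i$ is exactly the loose-consensus criterion, so it already belonged to $T$.

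For Adams and Aho consensus I will induct on $|X|$, reducing each to the claim that the top-level partition $\Pi(\PP')$ equals $\Pi(\PP)$, after which restriction to each block and the recursive definition of the method complete the induction. For Adams, $\Pi(\PP)$ is formed by nonempty intersections of maximal clusters; a generic block of $\Pi(\PP')$ is a nonempty intersection $A_1 \cap \cdots \cap A_k \cap B$ with $A_i$ maximal in $T_i$ and $B$ maximal in $T$. Since maximal clusters of each $T_i$ are pairwise disjoint and $B$ itself equals some $A_1^\ast \cap \cdots \cap A_k^\ast$, nonemptyness forces $A_i = A_i^\ast$, collapsing the intersection to $B$; hence $\Pi(\PP') = \Pi(\PP)$. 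For Aho consensus, use that the method is Pareto on rooted triples (equation (\ref{trip1}) in the text): if $ab \in E_\PP$ witnessed by $c$, then $ab|c$ is displayed by every tree of $\PP$, hence by $T$, hence $ab \in E_{\PP'}$ with the same witness. Since trivially $E_{\PP'} \subseteq E_\PP$, the two BUILD graphs coincide and so do their connected components.

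The only step that requires a moment's care is the Aho case, because the recursion is driven by the BUILD graph rather than by cluster intersections; without Pareto-on-triples one would have to argue directly that restricting the BUILD graph by the additional tree $T$ cannot disconnect any component. Invoking the Pareto property (already cited in the paper) converts this potential obstacle into a one-line inclusion. Finally, a short remark will note that equality of $\Pi(\PP')$ and $\Pi(\PP)$, together with the equality of the restricted profiles on each block $A \in \Pi(\PP)$ (namely $(\PP \cdot T)|_A = \PP|_A \cdot T|_A$ and $T|_A = \varphi(\PP|_A)$), lets the induction proceed for both Adams and Aho, completing the lemma.
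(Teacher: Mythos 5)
Your proposal is correct, and for strict, loose and majority rule it coincides with (or simply fills in) the paper's argument: the paper dismisses strict and loose as straightforward and proves majority rule by exactly your counting observation that adding $\varphi(\PP)$ to the profile preserves which clusters clear the half-way threshold. The genuine difference is in how you treat Adams and Aho. The paper handles Aho in one line by noting that $\varphi_{Ah}$ is a function of the set $R$ of rooted triples common to all input trees, and that $R$ is unchanged when $\varphi_{Ah}(\PP)$ (which displays all of $R$) is appended; your argument via Pareto-on-triples plus an induction over the BUILD recursion proves the same containment $R\subseteq r(\varphi_{Ah}(\PP))$ but then re-derives the conclusion level by level, which is slightly longer though equally valid (the recursion terminates because the common triple set is consistent, so each BUILD graph on two or more leaves is disconnected). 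For Adams, the paper simply cites the nesting characterisation of the Adams tree (Theorems 2 and 3 of Adams 1986), whereas you give a self-contained induction: the top-level partition of $\PP\cdot T$ equals $\Pi(\PP)$ because a nonempty intersection $A_1\cap\cdots\cap A_k\cap B$ with $B=A_1^{\ast}\cap\cdots\cap A_k^{\ast}$ forces $A_i=A_i^{\ast}$ by disjointness of maximal clusters, and then $T|_A=\varphi_{Ad}(\PP|_A)$ on each block lets the recursion close. Your route buys independence from the external characterisation and treats Adams and Aho uniformly through the shared recursive template; the paper's route buys brevity by outsourcing Adams to the literature and Aho to the observation that its output depends only on $R$. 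I see no gap in your argument.
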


\begin{proof}
It is straightforward to verify that strict consensus and loose consensus satisfy this property. 
For majority rule, suppose that $A$ is a cluster in $\varphi(\PP)$. Then $A$ is a cluster in more than half the trees in $\PP$, and since $A$ is a cluster of $\varphi(\PP)$ also,
$A$ is a cluster in  more than half the trees of $\PP \cdot \varphi(\PP)$. Thus, $A$ is a cluster of $\varphi(\PP\cdot \varphi(\PP))$.  Conversely, if
$A$ is not a cluster of $\varphi(\PP)$, then  $A$ is a cluster in  at most half the trees in $\PP$, and since $A$ is therefore not in $\varphi(\PP)$, $A$ is a cluster
in less than half the trees of $\PP \cdot \varphi(\PP)$. Thus $A$ is not a cluster of  $\varphi(\PP\cdot \varphi(\PP))$.


For Aho consensus, observe that the rooted triples displayed by $\varphi_{Ah}(\PP)$ contain the set $R$ of rooted triples that are displayed by every tree in
$\PP$, the set of rooted triples displayed by every tree in $\PP \cdot \varphi_{Ah}(\PP)$ equals $R$ and therefore:
$$\varphi_{Ah}(\PP\cdot \varphi_{Ah}(\PP))=\varphi_{Ah}(\PP).$$

The proof of Adams consensus follows from the characterisation of the Adams consensus tree in terms of nestings provided by Theorems 2 and 3 of \cite{1ada86}.

\end{proof}

We now prove Theorem~\ref{ad2}.

\begin{proof}
We establish the binary extension stability property first. 
The proof relies on the fact that Adams consensus satisfies a type of converse
to unanimity when the trees are binary. Namely, if the Adams tree is a binary tree then every input tree has to be identical to this tree. Formally, 
\begin{equation}
\label{adamsbin}
T \mbox{ binary and } \varphi(\PP)= T \mbox { implies }  \PP = (T, T, \ldots, T).
\end{equation}
To see that this holds for $\varphi= \varphi_{Ad}$, suppose that $\varphi_{Ad}(\PP)$ is a binary tree.  Then (i) all of the trees in $\PP$
must have exactly two maximal proper clusters, and (ii) the pair of maximal proper clusters of each tree in $\PP$ is the same for each tree in $\PP$.  
Property (i) holds because if one tree had three maximal proper clusters, then, by construction, $\varphi_{Ad}(\PP)$ would contain at least three maximal proper clusters and so would not be binary.  For Property (ii), observe that if there were two binary trees in $\PP$ that
had a different pair of maximal clusters, say $\{A, \overline{A}\}$ and $\{B, \overline{B}\}$, then at least
three nonempty intersections exist in the set $\{A \cap B, \overline{A} \cap B, A \cap \overline{B}, \overline{A} \cap \overline{B}\}$, and so, once again, $\varphi_{Ad}(\PP)$ would have at least three maximal proper clusters and so would  not be binary.  
Thus, the maximal proper clusters of $\varphi_{Ad}(\PP)$ agree with the maximal proper clusters of each tree in $\PP$.  The recursive nature of the Adams consensus  construction now means that this argument can be repeated at subsequent levels, to ensure that the trees in $\PP$ are all binary and identical.

It follows that if $\varphi_{Ad}(\PP|_Y)$ is a binary tree $T$ on $Y$, then $\PP|_Y = (T, \ldots, T)$, and so each rooted triple displayed by $T$ is also displayed by $\varphi_{Ad}(\PP)$, and hence $T$ is displayed by
$\varphi_{Ad}(\PP)$. Now, since $T$ is binary, this means that $\varphi_{Ad}(\PP)|_Y =T$, as claimed.

To  establish extension stability for clusters (I), notice that if $Y$ is a cluster of   $\varphi_{Ad}(\PP)$, then, by the recursive nature of the Adams consensus  method, $\varphi_{Ad}(\PP)|_Y = \varphi_{Ad}(\PP|_Y)$, which is an even stronger condition than extension stability for clusters (I).  For extension stability for clusters (II), if $Y$ is a cluster of each tree in $\PP$, then $Y$ is a cluster of
$\varphi_{Ad}(\PP)$, and so (again  by the recursive nature of the Adams consensus method) we have $\varphi_{Ad}(\PP)|_Y = \varphi_{Ad}(\PP|_Y)$, which establishes the property. 
\end{proof}

Finally, we prove Lemma~\ref{natlem}.

\begin{proof}
For $\PP = (T_1, T_2, \ldots, T_k)$, let $T'_1, \ldots, T'_r$ be a  list of the distinct trees that appear at least once in $\PP$ (so that $r \leq k$).   Thus $\PP = (T'_{f(1)}, \ldots, T'_{f(k)})$ for some surjection $f: \{1, \ldots, k\} \rightarrow \{1, \ldots, r\}$.  We will show that $\varphi(\PP) = T'_1 \circ T'_2 \circ \cdots \circ T'_r$, which establishes the lemma.
Since $\varphi$ is associatively stable, we can write 
$\varphi(\PP) = T_1 \circ T_2 \circ \cdots \circ T_k$, and since $\varphi$ satisfies commutatively it follows that 
$\varphi(\PP) = (T'_1)^{n_1} \circ (T'_2)^{n_2} \circ \cdots \circ (T'_r)^{n_r}$,
where $n_i$ is the number of times that tree $T'_i$ appears in $\PP$, and $(T'_i)^{n_i} = T'_i \circ \cdots  \circ T'_i$ [$n_i$ times].
Unanimity now ensures that $(T'_i)^{n_i} = T'_i$ for all $i$,  so $\varphi(\PP) = T'_1 \circ T'_2 \circ \cdots \circ T'_r$ as claimed. 
\end{proof}


\end{document}